\definecolor {processblue}{cmyk}{0.96,0,0,0}
\newtheorem{lemma}{Lemma}
\newtheorem{proof}{Proof}
\begin{document}

\title{Integrating Network Embedding and Community Outlier Detection via Multiclass Graph Description}

\author{Sambaran Bandyopadhyay\institute{IBM Research \& IISc, Bangalore, email: samb.bandyo@gmail.com} \and Saley Vishal Vivek\institute{Indian Institute of Science, Bangalore, email: vishalsaley@iisc.ac.in} \and M. N. Murty\institute{Indian Institute of Science, Bangalore, email: mnm@iisc.ac.in} }

\maketitle
\bibliographystyle{ecai}

\begin{abstract}
Network (or graph) embedding is the task to map the nodes of a graph to a lower dimensional vector space, such that it preserves the graph properties and facilitates the downstream network mining tasks. Real world networks often come with (community) outlier nodes, which behave differently from the regular nodes of the community. These outlier nodes can affect the embedding of the regular nodes, if not handled carefully. In this paper, we propose a novel unsupervised graph embedding approach (called DMGD) which integrates outlier and community detection with node embedding. We extend the idea of deep support vector data description to the framework of graph embedding when there are multiple communities present in the given network, and an outlier is characterized relative to its community. We also show the theoretical bounds on the number of outliers detected by DMGD. Our formulation boils down to an interesting minimax game between the outliers, community assignments and the node embedding function. We also propose an efficient algorithm to solve this optimization framework. Experimental results on both synthetic and real world networks show the merit of our approach compared to state-of-the-arts.
\end{abstract}

\begin{figure*}
\begin{subfigure}[t]{.24\textwidth}
  \centering
  \includegraphics[width=\linewidth]{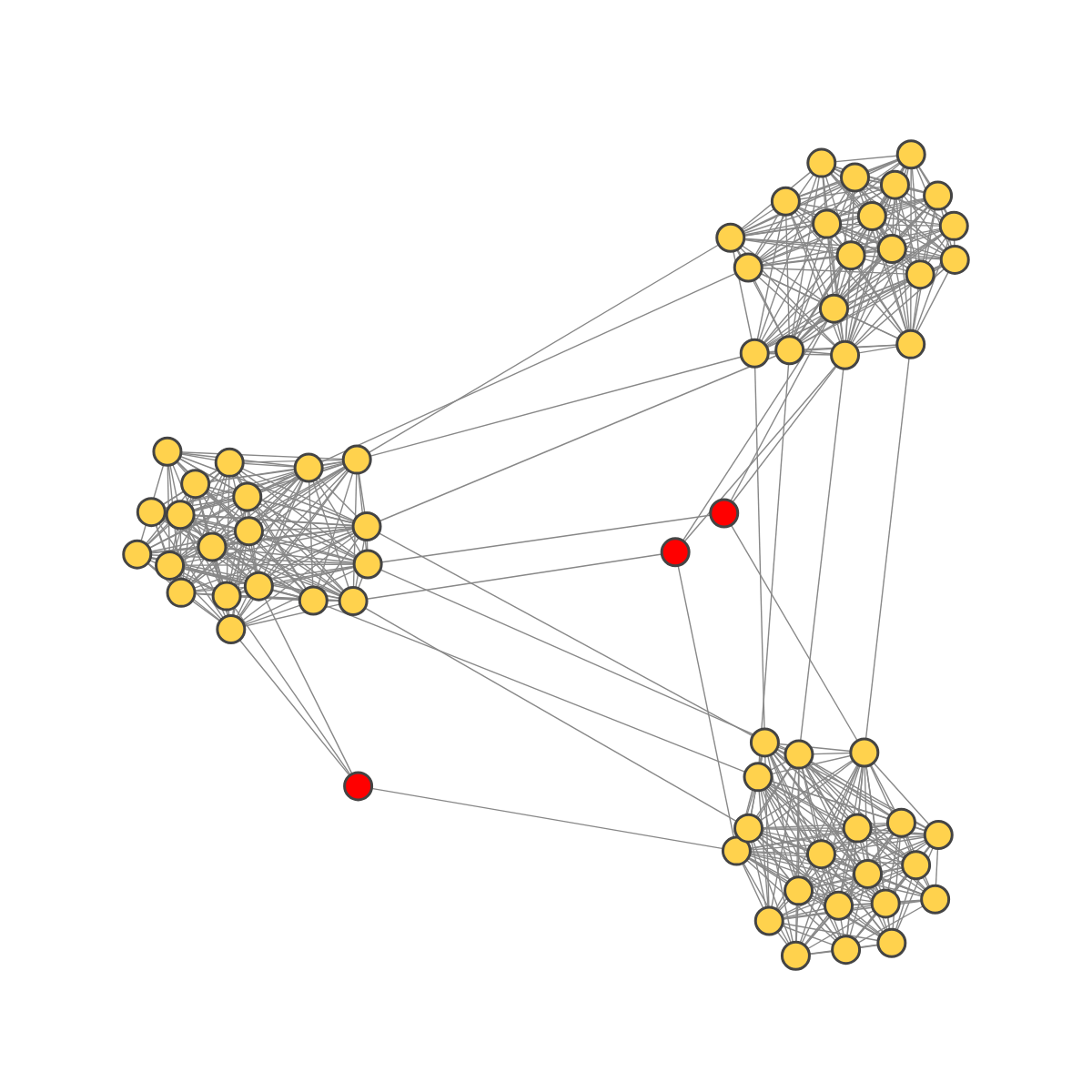}
  \caption{Input Graph}
\end{subfigure}
\begin{subfigure}[t]{.22\textwidth}
  \centering
  \includegraphics[width=\linewidth]{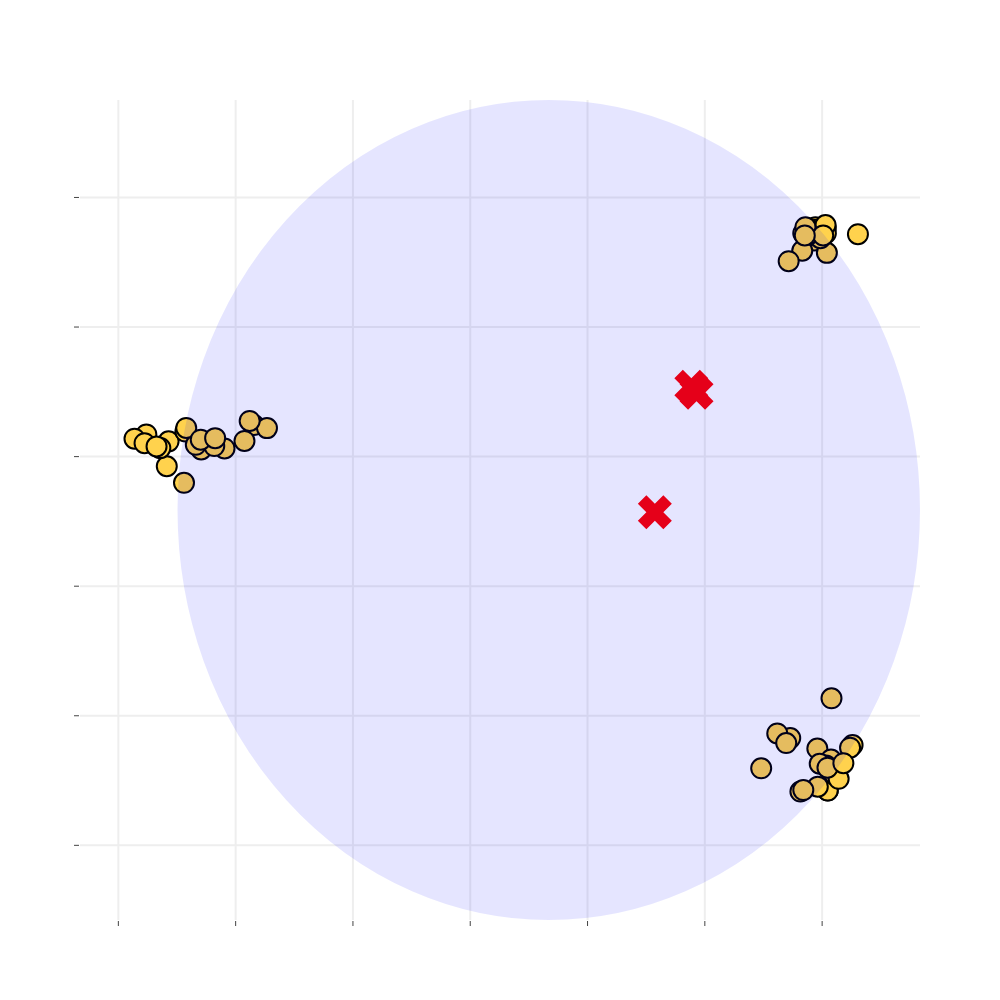}
  \caption{SVDD}
\end{subfigure}
\begin{subfigure}[t]{.28\textwidth}
  \centering
  \includegraphics[width=\linewidth]{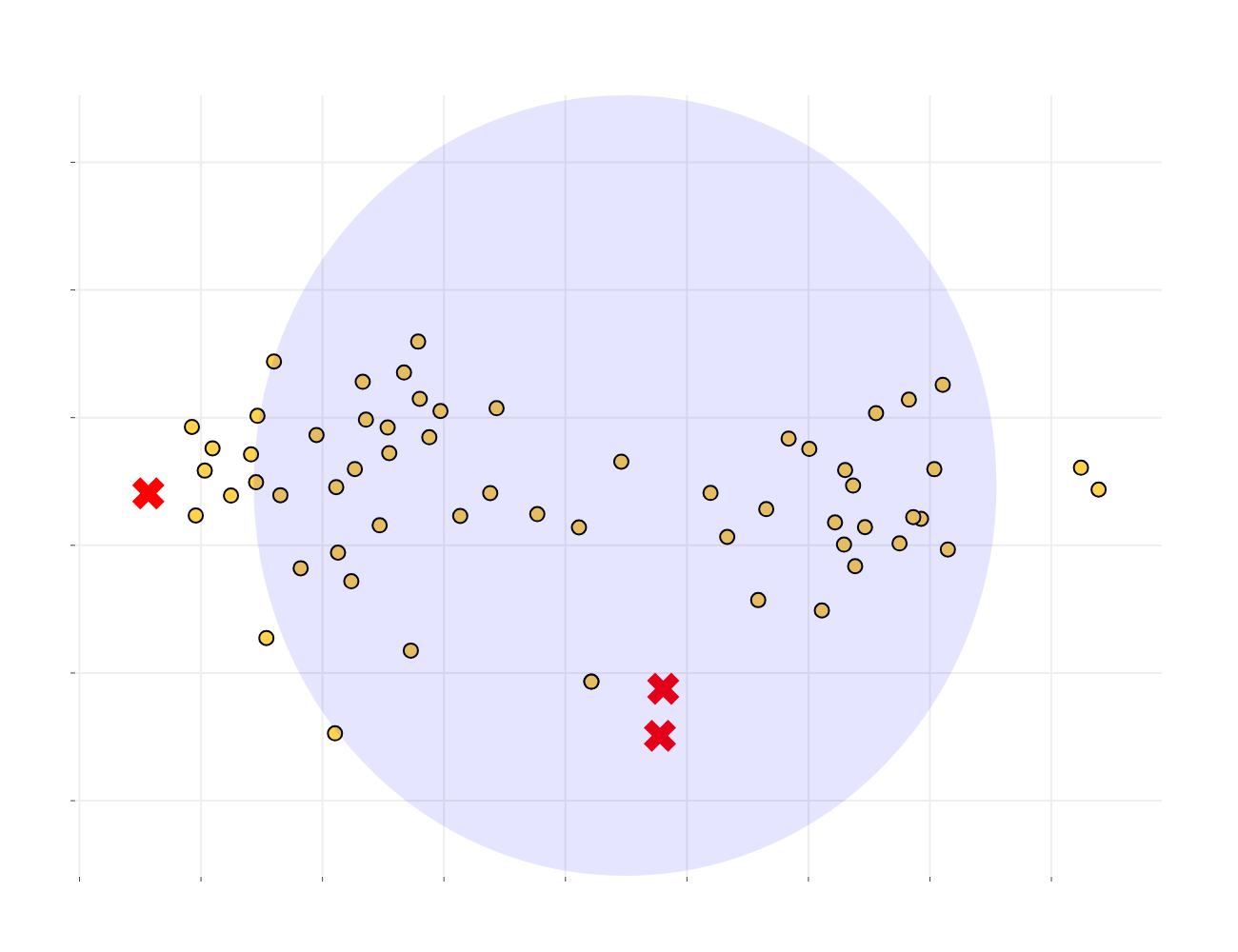}
  \caption{Deep SVDD}
\end{subfigure}
\begin{subfigure}[t]{.22\textwidth}
  \centering
  \includegraphics[width=\linewidth]{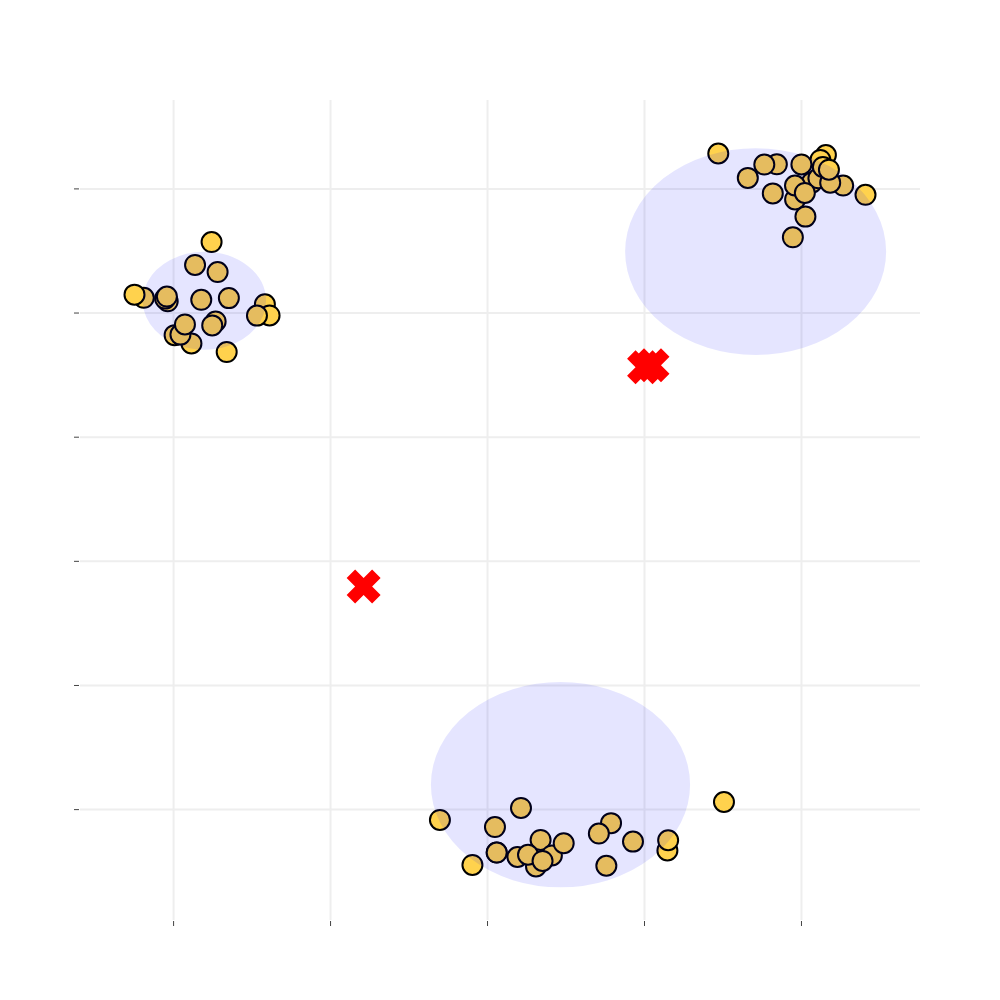}
  \caption{DMGD}
\end{subfigure}
\caption{We motivate and compare our proposed graph embedding algorithm DMGD with SVDD and Deep SVDD on a small synthetic network with 150 regular nodes divided into 3 communities. There are also 3 community outliers (marked in red) as they do not adhere the community structure of the network. We use Eq. \ref{eq:loss1} to generate the node embeddings, and then feed them to SVDD and Deep SVDD. We keep the embedding dimension as 2 to plot the node embeddings. As expected, SVDD and Deep SVDD do not respect the community structure of the network and hence mostly include all the outliers within the spheres they form. But DMGD (proposed algo.) detects all the outliers by keeping them outside of the learned community boundaries.}
\label{fig:motiv}
\end{figure*}

\section{\uppercase{Introduction}}\label{sec:intro}
Graphs are popularly used to model structured objects such as social and information networks. Given a graph $G=(V,E)$ with $N$ nodes, network embedding (also known as graph embedding or network representation learning) \cite{perozzi2014deepwalk,kipf2016semi} is the task to learn a function $f: V \rightarrow \mathbb{R}^M$, i.e., which maps each node of the graph to a vector of dimension $M < N$. 
The goal of graph embedding is to preserve the underlying graph structure in the embedding vector space. Typically, the quality of embedding is validated on several downstream graph mining tasks such as node classification, community detection (node clustering), etc. Different types of graph embedding techniques exist in the literature, such as random walk based embedding \cite{perozzi2014deepwalk,grover2016node2vec}, graph reconstruction based embedding \cite{wang2016structural,gao2018deep}, graph neural network based embedding \cite{kipf2016semi,hamilton2017inductive,velivckovic2018deep}, etc. Fundamentally, many of these algorithms work on the assumption of homophily \cite{mcpherson2001birds} property and the community structures that most of the networks exhibit. These properties ensure that nodes which are directly connected or closer to each other in the graph, tend to be similar to each other in attributes and form a community in the graph. 

Most of the above graph embedding algorithms perform good when the nodes behave as expected. But real world networks often contain nodes which are outliers in nature. These outlier nodes behave differently in terms of their connections to other nodes and attribute values, compared to most of the nodes in their respective communities (that's why they are often called community outliers). For example, an outlier node can almost be uniformly connected to nodes from different communities, thus violating the community structure of the network. In this work, we have used the phrases \textit{outlier} and \textit{community outlier} interchangeably.
Detection of community outliers has been studied in \cite{gao2010community,ding2019deep}.
But these outlier nodes, though are smaller in number typically, can significantly affect the embedding of the normal nodes, if not treated specially while generating the embeddings. It has been observed that mere post processing of the embeddings cannot filter out the outliers as the embeddings are already affected by them \cite{bandyopadhyay2018outlier}. Recently, \cite{liang2018semi} proposed a semi-supervised algorithm based on reconstruction loss, combined with node classification error of an autoencoder and \cite{bandyopadhyay2018outlier} proposed an unsupervised matrix factorization based approach to deal with outliers in network embedding, but strictly for attributed networks.

In general, the concept of outlier detection (a.k.a. one class classification) is a widely studied problem in machine learning \cite{scholkopf2001estimating,tax2004support,le2011multiple,wu2009small}. In this paper, we would focus on a recent approach, called deep support vector data description (deep SVDD) \cite{ruff2018deep}. Deep SVDD is a deep learning based extension of SVDD \cite{tax2004support}. In SVDD, regular data points are mapped to be within a sphere, while outliers stay outside. Deep SVDD uses deep neural networks to learn this mapping. Though Deep SVDD can be applied for detecting outliers in graphs, it is not suitable when the graph has multiple communities or clusters. As shown in Fig. \ref{fig:motiv}, an outlier between two communities can actually be marked as a non-outlier by an approach similar to Deep SVDD or SVDD, where they form only one sphere for the whole graph. Also this type of mapping is not suitable for representation learning on graphs, as there is no explicit way to preserve other  characteristics of the graph while detecting the outliers.

\textbf{Contributions:} 
We propose a novel deep learning based unsupervised algorithm (referred as DMGD - \textbf{D}eep \textbf{M}ulticlass \textbf{G}raph \textbf{D}escription) which extends the idea of support vector data description to jointly learn community outliers and node embedding by minimizing the effect of outliers in the embedding space. Our approach meets the requirements when the graph has multiple communities and an outlier is a node not being enclosed by any community. DMGD unifies node representation, outlier detection and community detection in graphs through a single optimization framework which boils down to an interesting minimax game. We have shown the \textbf{\textit{theoretical bounds}} on the number of outliers detected by DMGD. Experimental results depict the merit of DMGD on both synthetic and real life network datasets for various downstream network mining tasks.
Source code of DMGD can be found at \url{https://github.com/vasco95/DMGD} to ease the reproducibility of the results.

\section{\uppercase{Related Work}}\label{sec:graphEmb}
Detailed surveys on graph representation can be found in \cite{hamilton2017representation,wu2019comprehensive}. We briefly discuss some important graph embedding techniques here. The concept of representing words in a corpus by vectors \cite{mikolov2013distributed} in NLP literature influenced some early work in network representation learning. DeepWalk \cite{perozzi2014deepwalk}, node2vec \cite{grover2016node2vec} use random walk on the graph to capture nodes similar to a node and generate similar embeddings for the nodes which are close and frequently reachable from each other. struc2vec \cite{ribeiro2017struc2vec} is another random walk based technique where structurally similar nodes are assigned similar embeddings, even if they are far from each other in the graph. There are deep autoencoder based graph embedding techniques such as SDNE \cite{wang2016structural} and DNGR \cite{cao2016deep} which preserve different orders of proximities of the graph in the embedding space. TADW \cite{yang2015network}, AANE \cite{huang2017accelerated} and DANE \cite{gao2018deep} use complimentary information from the attributes associated with the nodes in the reconstruction of the graph properties (via matrix factorization and deep autoencoders) to generate node embeddings. Along with node proximity, global node ranking of the graph is preserved in the embedding space in \cite{lai2017prune}.

Graph neural networks \cite{scarselli2009graph} gained significant importance in the recent literature. A semi-supervised graph convolutional network (GCN) is proposed by recursively aggregating attribute information from the neighborhood of each node in \cite{kipf2016semi}. GraphSAGE \cite{hamilton2017inductive} is an inductive representation technique which proposes to aggregate different types of neighborhood aggregation functions in GCN. A scalable and faster version of GCN via neighborhood subsampling technique is proposed in \cite{chen2018fastgcn}. Attention mechanisms for graph embedding are proposed in GAT \cite{velivckovic2018deep} and in Graph Attention \cite{abu2018watch}.

None of the above techniques explicitly minimize the effect of outlier nodes in graph embeddings. However, real life social networks often come with outlier nodes, which can affect the embedding of the other nodes of the graph. Recently, \cite{liang2018semi} proposes a semi-supervised approach, SEANO, which learns outliers in network embedding framework. An unsupervised approach, ONE, is proposed in \cite{bandyopadhyay2018outlier} to minimize the effect of outliers by weighted matrix factorization for attributed network embedding.
Extending the idea of ONE, two deep neural architectures are proposed \cite{bandyopadhyay2020outlier} to minimize the effect of outliers on the node embeddings, again for the attributed networks.
These three approaches are based on the attributes present in the nodes and exploit the inconsistency between link structures and node attributes in the graph to detect the outliers.
In this paper, we propose an integrated unsupervised approach by extending the idea of SVDD from one cluster to multiple clusters, and pose it as a representation learning problem for outlier and community detection in graphs. Contrary to the existing literature, our proposed algorithm DMGD can work by focusing only on the network link structure to detect and minimize the effect of outliers in node embeddings.

\section{\uppercase{Preliminaries of SVDD and Deep SVDD}}\label{sec:svdd}
SVDD \cite{tax2004support}, inspired by support vector classifier, obtains a spherically shaped boundary around the regular points of the given dataset, characterizing outliers as the points which stay outside of the sphere. More formally, for a set of points $x_i \in \mathbb{R}^N$, SVDD aims to find the smallest hypersphere with center at $c \in \mathcal{F}$ (in feature space) and radius $R > 0$ which encloses most of the points, as follows.
\begin{equation}\label{eq:SVDD}
\begin{aligned}
& \underset{R,c,\mathbf{\xi}}{\text{min}}
& & R^2 + \alpha \sum\limits_{i} \xi_i \\
& \text{such that,}
& & ||\phi(x_i) - c||_\mathcal{F} \leq R^2 + \xi_i, \;\; \xi_i \geq 0 \;\; \forall i
\end{aligned}
\end{equation}
$\phi(x_i)$ is a function that maps the data points to a feature space $\mathcal{F}$. $\alpha > 0$ is a weight parameter and $\xi_i$ are the slack variables. Points for which $\xi_i >0$ stay outside of the sphere and are considered as outliers. 
Recently, Deep SVDD \cite{ruff2018deep} extends SVDD via deep learning. The (soft-boundary) Deep SVDD objective function is shown below:
\begin{equation}\label{eq:DeepSVDD}
\begin{aligned}
& \underset{R,\mathcal{W}}{\text{min}}
& R^2 + \alpha \sum\limits_{i} max\{0,||\phi(x_i;\mathcal{W})-c||^2 - R^2\} \\
&& + \frac{\lambda}{2} \sum\limits_{l=1}^L ||W^l||_F^2
\end{aligned}
\end{equation}
Deep SVDD replaces the function $\phi$ of SVDD with a deep neural network, with the set of parameters $\mathcal{W}$ which includes $L$ layers. Please note the second term of Eq. \ref{eq:DeepSVDD} is equivalent to having the slack variables in Eq. \ref{eq:SVDD}. To avoid trivial solution of the optimization problem, authors of \cite{ruff2018deep} do not include the center $c$ as an optimization variable. Rather, they fix it using some preprocessing.
Deep SVDD and SVDD suffer from a serious problem. They cannot distinguish outliers from the data when the given dataset has community structure and outliers can reside between communities, but not on the outskirt of the whole dataset, as shown in Fig. \ref{fig:motiv}. Besides as mentioned in Section \ref{sec:intro}, they are also not suitable for representation learning, as they do not ensure other important properties of the data objects to be preserved in the embedding (feature) space.

\section{\uppercase{Our Approach}}\label{sec:approach}
Here we discuss the proposed algorithm DMGD, which integrates graph embedding with outlier and community detection, in an unsupervised way. Given the graph $G=(V,E)$ with $|V|=N$,
DMGD learns a map $f: V \rightarrow \mathbb{R}^M$, where $M < N,D$. We also assume that there are $K$ unknown communities present in the graph. Our goal is to map $N$ vertices of the graph to $K$ communities, such that each regular point stays close to at least one of the communities, whereas, outliers stay outside of those communities. Along with that, we also want to preserve other graph properties in the embedding space, so that it facilitates the downstream graph mining tasks. For notational convenience, we define: $[N]= \{1,2,\cdots,N\}$, and similarly, $[K]=\{1,2,\cdots,K\}$.

First, we use a deep autoencoder to generate the initial graph embeddings. For each node $v_i$, the encoder function $f(a_i; \mathcal{W})$ maps the input structure vector to an $M$ dimensional space.
We use rows $a_i \in \mathbb{R}^N$, $\forall i \in [N]$ of the adjacency matrix of the graph $G$ as the structural vector. One can even use page rank vectors \cite{cao2016deep} to capture higher order proximities of the nodes or additional attributes (if available) to replace the structural vector.
There is also a decoder function $g(f(a_i), \mathcal{W})$ which maps the embedding of the node back to $\mathbb{R}^N$ space to reconstruct the input \footnote{We omit the parameter set $\mathcal{W}$ from the function definitions of $f$ and $g$ when there is no ambiguity}. $\mathcal{W}=\{W^1,\cdots,W^L\}$ contains parameters for $L$ layers of the autoencoder. We assume both encoder and decoder contain equal number of hidden layers. The autoencoder minimizes the reconstruction loss defined as: $\sum\limits_{i=1}^N || a_i - g(f(a_i)) ||_2^2$ with respect to the parameters $\mathcal{W}$ of the neural network. We also use the homophily property \cite{mcpherson2001birds} of an information network, which ensures two nodes which are directly connected by an edge to behave similarly. So, we minimize the L2 distance of the two embeddings where the corresponding nodes are connected by an edge: $\sum\limits_{(i,j) \in E}||f(a_i) - f(a_j) ||_2^2$. Thus, the total loss minimized to preserve these two properties is:
\begin{equation}\label{eq:loss1}
\begin{aligned}
& \underset{\mathcal{W}}{\text{min}}
& & \sum\limits_{i=1}^N || a_i - g(f(a_i)) ||_2^2 + \sum\limits_{(i,j) \in E}||f(a_i) - f(a_j) ||_2^2
\end{aligned}
\end{equation}
It is important to note that this formulation is very generic and can be replaced easily with alternate unsupervised techniques that are based on graph convolution autoencoders \cite{kipf2016variational} or random walks \cite{grover2016node2vec}.

Next, we integrate outlier and community detection with the graph embedding objective. In the process, we would also reduce the effect of outliers on the embedding of other regular nodes. Given, there are $K$ unknown communities in the input graph, we seek to obtain the centers of these $K$ communities. Let, $C$ contains these centers as, $C = \{c_1,\cdots,c_K\} \subset \mathbb{R}^M$. 
For each community, we like to find the smallest hypersphere \cite{tax2004support,le2011multiple} which encloses majority of the embeddings from that community. 
Let, $R_k > 0$ be the radius of the $k^{th}$ community, with $\mathcal{R} = \{R_1,\cdots,R_K\}$. For any node $v_i$, its community is determined by the sphere which encloses it (anyone if there are multiple such spheres) or by the smallest distance of the periphery of the spheres when it is outside of all the spheres (i.e., outliers). So the \textbf{community index} for the node $v_i$ is $\underset{k}{\text{argmin}} \; \text{max} \{ ||f(a_i) - c_k||_2^2 - R_k^2, \; 0 \}$. So, given the embeddings of the nodes as $f(a_i)$, $\forall i$, we optimize the following quantity:
\begin{equation}\label{eq:loss2}
\begin{aligned}
& \underset{\mathcal{R},\mathcal{C},\mathbf{\xi}}{\text{min}}
& & \sum\limits_{k=1}^K R_k^2 + \alpha \sum\limits_{i=1}^N \xi_i \\
& \text{such that}
& & \underset{k \in \{1,\cdots,K\}}{\text{min}} \; \{||f(a_i) - c_k||_2^2 - R_k^2\} \leq \xi_i \;\; \forall i \in [N] \\
&&& \xi_i \geq 0, \; \forall i \in [N] \;\; \text{and} \;\; R_k \geq 0, \; \forall k \in [K]
\end{aligned}
\end{equation}
Here $\xi_i \geq 0$ is the slack variable corresponding to the $i$th node of the graph. With respect to this formulation, nodes can be divided into three categories as follows. A \textbf{regular node} is one which stays strictly inside a community, and thus for it: $\underset{k}{\text{min}} \; \{||f(a_i) - c_k||_2^2 - R_k^2\} < 0$. A \textbf{boundary node} is one which lies exactly on the boundary of its community, so for it: $\underset{k}{\text{min}} \; \{||f(a_i) - c_k||_2^2 - R_k^2\} = 0$. An \textbf{outlier node} stays outside of all the communities, and thus for it: $\xi_i > 0$ (strictly positive).
These slack variables ensure soft spherical boundaries of the communities, and outlier nodes stay outside of the community. When the weight parameter $\alpha$ is very small, an optimizer would mainly focus to minimize the first term in the cost function in Eq. \ref{eq:loss2}, leading to very small (in terms of radius) spherical communities and many nodes will be treated as outliers. Whereas, a higher value of $\alpha$ ensures lesser number of outliers with larger communities. 
Our approach implicitly assumes that the communities in the graph are spherical in the embedding space, which is not a hard requirement. Because of the soft boundaries of the spheres and the characterization of the outliers, it can also handle communities which are not exactly spherical.
The nonlinear properties of the original network are captured by neural networks for mapping into the embedding space.

Eq. \ref{eq:loss1} ensures that generic graph properties are preserved in the embedding space, while Eq. \ref{eq:loss2} ensures that the community structure is maintained in the embedding space, while separating outliers from the other nodes. So the combined objective of DMGD is given below:
\begin{equation}\label{eq:jointLoss}
\begin{aligned}
& \underset{\mathcal{R},\mathcal{C},\mathbf{\xi},\mathcal{W}}{\text{min}}
& & \sum\limits_{k=1}^K R_k^2 + \alpha \sum\limits_{i=1}^N \xi_i + \beta \sum\limits_{i=1}^N || a_i - g(f(a_i)) ||_2^2
\\
&&& + \gamma \sum\limits_{(i,j) \in E}||f(a_i) - f(a_j) ||_2^2\\
& \text{such that,}
& & \underset{k \in \{1,\cdots,K\}}{\text{min}} \; \{||f(a_i) - c_k||_2^2 - R_k^2\} \leq \xi_i \;\; \forall i \in [N] \\
&&& \xi_i \geq 0, \; \forall i \in [N] \;\; \text{and} \;\; R_k \geq 0, \; \forall k \in [K]
\end{aligned}
\end{equation}
Following lemmas show the formal connection between the number of outliers and the parameter $\alpha$.
\begin{lemma}\label{lemma:upbound}
\textbf{Number of outlier nodes ($|\{v_i : \xi_i > 0 \}|$) detected by DMGD is upper bounded by $\frac{K}{\alpha}$.}
\end{lemma}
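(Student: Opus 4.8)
The plan is to prove the bound via a first-order sensitivity (perturbation) argument on the community radii at the optimum, which is a multi-sphere analogue of the classical $\nu$-property of one-class SVDD. First I would exploit block optimality: at a global minimizer $(\mathcal{R}^*,\mathcal{C}^*,\mathbf{\xi}^*,\mathcal{W}^*)$ of Eq.~\ref{eq:jointLoss}, the restriction to the variables $(\mathcal{R},\mathbf{\xi})$ with $f$, $\mathcal{C}$ and $\mathcal{W}$ held fixed is itself optimal, and crucially the reconstruction and homophily terms do not involve $\mathcal{R}$ or $\mathbf{\xi}$ at all, so they are constants for this sub-problem. This reduces the analysis to exactly the objective $\sum_{k} R_k^2 + \alpha \sum_i \xi_i$ of Eq.~\ref{eq:loss2}.

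Next I would set up a per-community accounting of the outliers. For an outlier node ($\xi_i > 0$) the constraint must be active, i.e. $\xi_i = \min_k \{\|f(a_i)-c_k\|_2^2 - R_k^2\} > 0$, since otherwise $\xi_i$ could be decreased strictly, contradicting optimality. I assign each outlier to a community $k_i \in \argmin_k \{\|f(a_i)-c_k\|_2^2 - R_k^2\}$ (breaking ties arbitrarily) and let $n_k$ denote the number of outliers assigned to community $k$, so that $\sum_{k=1}^K n_k$ equals the total number of outliers.

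The core step is the perturbation. For a fixed community $k$ I would increase its radius $R_k \to R_k + \delta$ for a small $\delta > 0$, leaving all centers, embeddings and the remaining radii unchanged. This raises the radius term by $2 R_k \delta + \delta^2 > 0$. For every outlier assigned to $k$, the quantity $\|f(a_i)-c_k\|_2^2 - R_k^2$ drops by exactly $2R_k\delta+\delta^2$, so, choosing $\delta$ small enough that these finitely many slacks stay positive, I may lower each such $\xi_i$ by $2R_k\delta+\delta^2$; for every other node the active minimum can only decrease, hence keeping its slack unchanged remains feasible. The perturbation is therefore feasible and changes the objective by $\Delta = (2R_k\delta+\delta^2)(1-\alpha n_k)$. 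Optimality forces $\Delta \ge 0$, and since $2R_k\delta+\delta^2 > 0$ (this holds even if $R_k=0$, because $\delta^2>0$) this yields $n_k \le 1/\alpha$. Summing over the $K$ communities gives $|\{v_i : \xi_i>0\}| = \sum_k n_k \le K/\alpha$.

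The step I expect to be the main obstacle is controlling the $\min_k$ (argmin) structure under the perturbation. I must verify that increasing $R_k$ never renders a feasible slack infeasible for nodes not assigned to $k$ (it cannot, since the minimum defining their constraint can only decrease), and that for the outliers assigned to $k$ the argmin remains $k$ after the perturbation (it does, because their community-$k$ term strictly decreases while all others are unchanged), so the count $n_k$ used in the bound is exactly the set whose slacks I lower. A uniform choice of small $\delta$ keeping all the relevant slacks strictly positive is possible because only finitely many nodes are involved.
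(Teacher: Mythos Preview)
Your argument is correct and follows essentially the same route as the paper: assign each outlier to a single community via the argmin, then bound the number of outliers per community by $1/\alpha$ and sum over $K$ communities. The only difference is cosmetic: the paper rewrites $\sum_k R_k^2 + \alpha\sum_i \xi_i = \sum_k\bigl(R_k^2 + \tfrac{1}{1/\alpha}\sum_{i\in C_k}\xi_i\bigr)$ and then \emph{cites} the $\nu$-property of Sch\"olkopf et al.\ for each per-community SVDD, whereas you carry out the underlying radius-perturbation argument explicitly, which makes your version self-contained.
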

\begin{proof}
Suppose, the communities detected by DMGD are denoted by $C_k$, $k=1,\cdots,K$. The community index for node $v_i$ is $\underset{k}{\text{argmin}} \; \text{max} \{ ||f(a_i) - c_k||_2^2 - R_k^2, \; 0 \}$. We use $\nu$-property as stated in \cite{scholkopf2001estimating} to prove the claim. Partial objective function of DMGD can be written as:
\begin{align*}
    \sum\limits_{k=1}^K R_k^2 + \alpha \sum\limits_{i=1}^N \xi_i \; = \; \sum\limits_{k=1}^K \Big( R_k^2 + \frac{1}{\frac{1}{\alpha}} \sum\limits_{i \in C_k} \xi_i \Big)
\end{align*}
So, when the communities are fixed, the objective for each community is exactly same as SVDD and thus $\nu$-property ensures that the number of outliers from each community is upper bounded by $\frac{1}{\alpha}$. Hence total number of outlier nodes detected by DMGD is upper bounded by $\frac{K}{\alpha}$.
\end{proof}
\begin{lemma}\label{lemma:lowbound}
\textbf{The sum of boundary nodes (nodes which lie exactly on the boundaries of their respective communities) and outlier nodes detected by DMGD is lower bounded by $\frac{K}{\alpha}$.}
\end{lemma}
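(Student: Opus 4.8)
The plan is to mirror the argument of Lemma~\ref{lemma:upbound}, but to invoke the \emph{other} half of the $\nu$-property of \cite{scholkopf2001estimating}. First I would freeze the community assignments at their optimal values, exactly as in the proof of the upper bound. Once each node's assignment to a community $C_k$ is fixed, the partial objective $\sum_{k=1}^K R_k^2 + \alpha \sum_{i=1}^N \xi_i$ decomposes into $K$ independent subproblems, the $k$th of which is $R_k^2 + \alpha \sum_{i \in C_k} \xi_i$. Writing $\alpha = \frac{1}{\nu_k |C_k|}$, each subproblem is precisely a $\nu$-SVDD instance on the nodes assigned to community $k$, with $\nu_k |C_k| = \frac{1}{\alpha}$.

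Next I would apply the lower-bound direction of the $\nu$-property: in $\nu$-SVDD the number of \emph{support vectors} — the points lying on or outside the learned sphere — is at least $\nu_k |C_k| = \frac{1}{\alpha}$. The key step is to match support vectors to the node categories defined after Eq.~\ref{eq:loss2}. By the KKT conditions, a node strictly inside its sphere carries zero dual weight and is not a support vector; an \textbf{outlier node} (with $\xi_i > 0$) lies strictly outside and is a support vector sitting at the upper bound $\alpha$ of its dual variable; and a \textbf{boundary node} (with $\xi_i = 0$ lying exactly on the sphere) is a support vector with dual weight in $(0,\alpha)$. Hence the support vectors of the $k$th subproblem are exactly the boundary and outlier nodes of $C_k$, and their count is at least $\frac{1}{\alpha}$. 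Summing over the $K$ communities yields the claimed lower bound of $\frac{K}{\alpha}$ on the total number of boundary plus outlier nodes.

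The hard part, as in Lemma~\ref{lemma:upbound}, is not the counting but justifying the decomposition cleanly: the community index of each node is itself determined by the $\min_k$ constraint and is therefore coupled across the spheres, so one must argue that at an optimum the problem genuinely splits into $K$ separate SVDD instances to which the $\nu$-property applies verbatim. I would make this precise by noting that, for a fixed optimal assignment, the inner $\min_k$ in the constraint is attained at the node's own assigned community, so each $\xi_i$ is governed solely by the sphere of that community; this decouples the slack sums exactly as written and lets the standard $\nu$-property be invoked per community. The only other point requiring care is confirming that the generic (non-degenerate) correspondence between dual weights and the three node types holds, so that ``support vector'' is identified with ``boundary node or outlier node'' without exception.
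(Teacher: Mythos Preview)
Your proposal is correct and follows exactly the approach the paper intends: the paper's own proof is the single sentence ``This can also be proved similarly using the $\nu$-property,'' and what you have written is precisely the natural elaboration of that sentence---freeze the community assignments, decompose into $K$ per-community SVDD problems as in Lemma~\ref{lemma:upbound}, and invoke the lower-bound half of the $\nu$-property (support vectors $\geq \nu|C_k| = 1/\alpha$) instead of the upper-bound half. Your additional care about the decoupling of the $\min_k$ constraint and the identification of support vectors with boundary-plus-outlier nodes goes beyond what the paper spells out, but is consistent with its argument.
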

This can also be proved similarly using the $\nu$-property. Lemmas \ref{lemma:upbound} and \ref{lemma:lowbound} together show that the number of outlier nodes detected by DMGD is actually bounded tightly, assuming not many nodes lie exactly on the community boundaries.

\subsection{Optimization and Training}\label{sec:opt}
Optimizing Eq. \ref{eq:jointLoss} is difficult because of multiple reasons. One primary reason is the presence of the constraints of type $\underset{k}{\text{min}} \; \{||f(a_i) - c_k||_2^2 - R_k^2\} \leq \xi_i$. So we use the following trick to replace them with some other variables, as follows:
\begin{equation}\label{eq:altLoss}
\begin{aligned}
& \underset{\mathcal{R},\mathcal{C},\mathbf{\xi},\mathcal{W}, \Theta}{\text{min}}
& & P = \sum\limits_{k=1}^K R_k^2 + \alpha \sum\limits_{i=1}^N \xi_i + \beta \sum\limits_{i=1}^N || a_i - g(f(a_i)) ||_2^2 \\
&&& + \gamma \sum\limits_{(i,j) \in E}||f(a_i) - f(a_j) ||_2^2\\
& \text{such that,}
& & \sum\limits_{k=1}^K \theta_{ik}(||f(a_i) - c_k||_2^2 - R_k^2) \leq \xi_i \;\; \forall i \in [N], \\ 
&&& \sum\limits_{k=1}^K \theta_{ik} = 1, \;\;\; \theta_{ik} \geq 0, \;\; \forall i \in [N], \; \forall k \in [K],\\
&&& \xi_i \geq 0, \;\; \forall i \in [N] \;\; \text{and} \;\;  R_k \geq 0, \;\; \forall k \in [K]
\end{aligned}
\end{equation}
We assume, $\Theta =\{\theta_{ik} \; | \; \forall i, k\}$. Optimizing Eq. \ref{eq:altLoss} is simpler compared to Eq. \ref{eq:jointLoss}, as we replaced the minimum over some functions by a functions which are linear in $\Theta$.
\begin{lemma}\label{lemma:equiv}
\textbf{If $(\mathcal{R}^*,\mathcal{C}^*,\mathbf{\xi}^*,\mathcal{W}^*, \Theta^*)$ is a minimum of Eq. \ref{eq:altLoss}, then $(\mathcal{R}^*,\mathcal{C}^*,\mathbf{\xi}^*,\mathcal{W}^*)$ is a minimum of Eq. \ref{eq:jointLoss}.}
\end{lemma}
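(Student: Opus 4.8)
The plan is to exploit the fact that the auxiliary variables $\Theta$ enter Eq.~\ref{eq:altLoss} only through the constraints and never through the objective $P$; on the variables $(\mathcal{R},\mathcal{C},\mathbf{\xi},\mathcal{W})$ shared by the two programs, the objective of Eq.~\ref{eq:altLoss} is therefore literally the same function as that of Eq.~\ref{eq:jointLoss}. Consequently the whole claim reduces to showing that the projection of the feasible region of Eq.~\ref{eq:altLoss} onto the $(\mathcal{R},\mathcal{C},\mathbf{\xi},\mathcal{W})$ coordinates coincides with the feasible region of Eq.~\ref{eq:jointLoss}. The single analytic ingredient I would use is the elementary identity $\min_{\theta \in \Delta_K} \sum_{k=1}^K \theta_{k} d_{k} = \min_{k} d_{k}$, where $\Delta_K$ is the probability simplex: a linear functional over the simplex attains its minimum at a vertex and equals the smallest coefficient. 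Writing $d_{ik} := ||f(a_i)-c_k||_2^2 - R_k^2$, this is exactly the link between the constraint $\sum_k \theta_{ik} d_{ik} \le \xi_i$ of Eq.~\ref{eq:altLoss} and the constraint $\min_k d_{ik} \le \xi_i$ of Eq.~\ref{eq:jointLoss}.

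First I would establish the two inclusions between the feasible sets. For the forward direction, given any point feasible for Eq.~\ref{eq:jointLoss}, so that $\min_k d_{ik} \le \xi_i$ for every $i \in [N]$, I would construct an admissible $\Theta$ by setting $\theta_{ik}=1$ for some index $k$ attaining the minimum and $\theta_{ik}=0$ otherwise; this $\Theta$ lies in the simplex and makes $\sum_k \theta_{ik} d_{ik}$ equal to $\min_k d_{ik} \le \xi_i$, so the augmented point is feasible for Eq.~\ref{eq:altLoss} with the same objective value. For the reverse direction, I would use that any convex combination dominates the minimum: for $\Theta$ in the simplex, $\sum_k \theta_{ik} d_{ik} \ge \min_k d_{ik}$, so any point feasible for Eq.~\ref{eq:altLoss}, after dropping $\Theta$, satisfies the min-constraint of Eq.~\ref{eq:jointLoss}.

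With both inclusions in hand, the conclusion follows by a value-matching argument. Let $(\mathcal{R}^*,\mathcal{C}^*,\mathbf{\xi}^*,\mathcal{W}^*,\Theta^*)$ be a minimum of Eq.~\ref{eq:altLoss} with optimal value $P^*$. By the reverse inclusion, $(\mathcal{R}^*,\mathcal{C}^*,\mathbf{\xi}^*,\mathcal{W}^*)$ is feasible for Eq.~\ref{eq:jointLoss}, and since $\Theta$ does not affect the objective, its objective value in Eq.~\ref{eq:jointLoss} is exactly $P^*$. If it were not a minimizer of Eq.~\ref{eq:jointLoss}, there would exist a feasible $(\mathcal{R}',\mathcal{C}',\mathbf{\xi}',\mathcal{W}')$ with strictly smaller value; the forward inclusion would then supply a $\Theta'$ making $(\mathcal{R}',\mathcal{C}',\mathbf{\xi}',\mathcal{W}',\Theta')$ feasible for Eq.~\ref{eq:altLoss} with that same smaller value, contradicting the optimality of $P^*$. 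Hence $(\mathcal{R}^*,\mathcal{C}^*,\mathbf{\xi}^*,\mathcal{W}^*)$ minimizes Eq.~\ref{eq:jointLoss}.

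I do not expect a genuine obstacle here: the content is entirely the simplex-minimization identity together with the observation that $\Theta$ is absent from the objective. The only point requiring a little care is bookkeeping, making explicit that it is \emph{feasibility}, not the objective, that the $\Theta$ variables control, and checking that the reformulation does not perturb the optimal slacks (indeed, since $\alpha>0$ the optimizer drives each $\xi_i$ down to $\max\{0,\min_k d_{ik}\}$ in both programs). One could also note that the same pair of inclusions yields the converse direction, so that the two programs are in fact fully equivalent, but only the stated direction is needed.
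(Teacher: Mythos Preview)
Your argument is correct and rests on the same analytic core as the paper's proof, namely the simplex identity $\min_{\theta\in\Delta_K}\sum_k\theta_k d_k=\min_k d_k$. The packaging, however, is genuinely cleaner than what the paper does. You observe directly that feasibility in Eq.~\ref{eq:altLoss} implies feasibility in Eq.~\ref{eq:jointLoss} with the \emph{same} slack vector $\xi$, since $\xi_i\ge\sum_k\theta_{ik}d_{ik}\ge\min_k d_{ik}$; together with the forward lift via one-hot $\Theta$, this gives an exact match between the projected feasible set and that of Eq.~\ref{eq:jointLoss}, and the value-matching contradiction finishes the job. The paper instead constructs modified slacks $\bar\xi_i=\xi_i-(\sum_k\theta_{ik}r_{ik}-r_{ik^*})$ and a hard assignment $\bar\Theta$, shows $P$ does not increase under this replacement, and then argues that at a minimum one must have $\bar\xi=\xi$, whence the minimizer lies in the feasible set of Eq.~\ref{eq:jointLoss}. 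Your route avoids this detour entirely (and sidesteps the small bookkeeping issue that the paper's $\bar\xi_i$ as written can be negative when $r_{ik^*}<0$, which would need a $\max\{0,\cdot\}$ to repair). Both approaches buy the same conclusion; yours gets there with less machinery and also makes the full equivalence of the two programs transparent, as you note.
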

\begin{proof}
The feasible set of the optimization problem in Eq. \ref{eq:altLoss} is a super set of that in Eq. \ref{eq:jointLoss}. Hence the minimum value of Eq. \ref{eq:altLoss} is always less than or equal to the minimum value of Eq. \ref{eq:jointLoss}. Also, the loss function of both the optimization problems are the same. 
Let us denote the term: $r_{ik} = ||f(a_i) - c_k||_2^2 - R_k^2$. For any $i$, define $k^*(i)$ as the community index for which $r_{ik}$ is minimum, i.e., $k^*(i) = \underset{k}{\text{argmin}} \; r_{ik}$. We will write $k^*(i)$ as just $k^*$ when there is no ambiguity about $i$.

To prove the claim: 
First, let us prove, for any given point $(\mathcal{R},\mathcal{C},\mathbf{\xi},\mathcal{W}, \Theta)$, there exists a set of non-negative slack variables $\bar{\mathbf{\xi}}$ for which $P(\mathcal{R},\mathcal{C},\bar{\mathbf{\xi}},\mathcal{W}, \bar{\Theta}) \leq P(\mathcal{R},\mathcal{C},\mathbf{\xi},\mathcal{W}, \Theta)$, where $P$ is the cost function in Eq. \ref{eq:altLoss} and in $\bar{\Theta}$, for each $i$, $\theta_{ik^*}=1$, and $\theta_{ik}=0$, $\forall k \neq k^*$. 
Now, for any $i \in [N]$, $\sum\limits_{k \in [K]} \theta_{ik} r_{ik} \geq r_{ik^*}$, as $r_{ik^*} \leq r_{ik}$ and $\theta_{ik}\geq 0$, $\forall k$ with $\sum\limits_{k}\theta_{ik}=1$. For each $i$, set $\bar{\xi}_i = \xi_i - (\sum\limits_{k \in [K]} \theta_{ik} r_{ik} - r_{ik^*}) \geq 0$. Clearly, $\bar{\xi}_i \leq \xi_i$, and $\sum\limits_{i}\bar{\xi}_i \leq \sum\limits_{i}\xi_i$. Hence, $P(\mathcal{R},\mathcal{C},\bar{\mathbf{\xi}},\mathcal{W}, \bar{\Theta}) \leq P(\mathcal{R},\mathcal{C},\mathbf{\xi},\mathcal{W}, \Theta)$.

Thus if $(\mathcal{R}^*,\mathcal{C}^*,\mathbf{\xi}^*,\mathcal{W}^*, \Theta^*)$ is already a minimum point of Eq. \ref{eq:altLoss}, then $\bar{\mathbf{\xi}} = \mathbf{\xi}$. Hence, $(\mathcal{R}^*,\mathcal{C}^*,\mathbf{\xi}^*,\mathcal{W}^*, \bar{\Theta}^*)$ is a minimum of Eq. \ref{eq:altLoss} and also because of the definition of $\bar{\Theta}$, $(\mathcal{R},\mathcal{C},\bar{\mathbf{\xi}},\mathcal{W})$ belongs to the feasible set of Eq. \ref{eq:jointLoss} and is a minimum of the same.
\end{proof}
Lemma \ref{lemma:equiv} shows solving the optimization problem in Eq. \ref{eq:altLoss} is somewhat equivalent\footnote{There can other issues related to different local minima of the cost functions.} to solving the optimization problem in Eq. \ref{eq:jointLoss}. Hence, we will focus to solve Eq. \ref{eq:altLoss} now onward. Let us first derive the partial Lagrangian dual of the same with respect to the variables $\mathcal{R},\mathcal{C},\mathbf{\xi},\mathcal{W}, \Theta$, assuming $\mathcal{W}$ (parameters of the neural network) as constant. We will later update $\mathcal{W}$ by backpropagation.
\begin{align}
    \mathcal{L} = \sum\limits_{k=1}^K R_k^2 + \alpha \sum\limits_{i=1}^N \xi_i + & \sum\limits_{i=1}^N \lambda_i \sum\limits_{k=1}^K \Big[ \theta_{ik} \big( f(a_i) - c_k ||^2 - R_k^2 \big) \nonumber \\
    & - \xi_i \Big] - \sum\limits_{i=1}^N \eta_i \xi_i
\end{align}
Here, $\lambda_i$, $\eta_i$, $\forall i \in [N]$ are non negative Lagrangian constants. Equating the partial derivatives of $\mathcal{L}$ with respect to $R_k$, $c_k$, and $\xi_i$ to zero and using some algebraic manipulation, we get the following:
\begin{align}\label{eq:KKT}
    \sum\limits_{i=1}^N \lambda_I \theta_{ik} = 1, \; \forall k, \;\;\; & c_k = \sum\limits_{i=1}^N \lambda_i \theta_{ik} f(a_i), \; \forall k \in [K], \nonumber \\ 
    & 0 \leq \lambda_i \leq \alpha, \; \forall i \in [N]
\end{align}
Using the above constraints under KKT conditions, the (partial) Lagrangian dual can be written as:
\begin{align}\label{eq:parDual}
    \mathcal{D} = &\sum\limits_{i=1}^N \sum\limits_{k=1}^K \lambda_i \theta_{ik} f(a_i)^T f(a_i) \nonumber \\ 
    & - \sum\limits_{i=1}^N \sum\limits_{j=1}^N \sum\limits_{k=1}^K \lambda_i \lambda_j \theta_{ik} \theta_{jk} f(a_i)^T f(a_j)
\end{align}
Clearly we want to maximize the above w.r.t. $\Lambda$ (where, $\Lambda = (\lambda_1,\cdots,\lambda_N)^T \in \mathbb{R}^N$) and minimize with respect to $\Theta$. Hence, including the terms associated with the neural network parameters and the set of constraints, we seek to solve the following for DMGD.
\begin{equation}\label{eq:dual}
    \begin{aligned}
    &\underset{\Theta, \mathcal{W}}{min} \;\; \underset{\Lambda}{max} \;\;
    \sum\limits_{i=1}^N \lambda_i f(a_i)^T f(a_i) \\
    & - \sum\limits_{i=1}^N \sum\limits_{j=1}^N \sum\limits_{k=1}^K \lambda_i \lambda_j \theta_{ik} \theta_{jk} f(a_i)^T f(a_j) \\
    & + \beta \sum\limits_{i=1}^N || a_i - g(f(a_i)) ||_2^2 + \gamma \sum\limits_{(i,j) \in E}||f(a_i) - f(a_j) ||_2^2 \\
 & \text{such that,} \;\; \sum\limits_{i=1}^N \lambda_i \theta_{ik} = 1, \;\;\; 0 \leq \lambda_i \leq \alpha, \\ 
 & \sum\limits_{k=1}^K \theta_{ik} = 1, \;\;\; \theta_{ik} \geq 0, \;\;\; \forall i \in [N], \; \forall k \in [K]
\end{aligned}
\end{equation}
\textbf{Interpretation of the Optimization}: The above formulation is a very interesting minimax game, where the objective is to minimize the cost w.r.t. the community assignment variables $\Theta$ and neural network parameters $\mathcal{W}$; and to maximize the same w.r.t. the Lagrangian constants $\Lambda$. From KKT complementary slackness, it can be shown that, $0 \leq \lambda_i \leq \alpha$ for the boundary nodes and $\lambda_i = \alpha$ for the outlier nodes. For regular nodes, $\lambda_i=0$. Thus, only the third and fourth terms of the cost function in Eq. \ref{eq:dual} play role to generate the embedding of the regular points. But for a boundary or an outlier point, first two terms also contribute. If we focus on the second term $\sum\limits_{i=1}^N \sum\limits_{j=1}^N \sum\limits_{k=1}^K \lambda_i \lambda_j \theta_{ik} \theta_{jk} f(a_i)^T f(a_j)$ which needs to be minimized w.r.t. $\Lambda$, $\lambda_i$ would be high for a node which is less similar (small values of $f(a_i)^T f(a_j)$) to all other nodes in the community. This is another way of interpreting outliers in this framework. Whereas, the embedding function $f$, by maximizing the above, tries to keep the nodes in the same cluster close to each other. Minimizing the first term w.r.t. the neural network parameters is equivalent to having an L2 regularizer on the embeddings of the nodes. The nodes with very high L2 norms are more likely to be outliers because of this term. Rewriting the second term we get $\sum\limits_{i=1}^N \sum\limits_{j=1}^N \lambda_i \lambda_j f(a_i)^T f(a_j)(\sum\limits_{k=1}^K \theta_{ik} \theta_{jk})$. Maximization of this term wrt community assignments $(\theta_{i1},\cdots,\theta_{iK})$ will be such that the term $\sum\limits_{k=1}^K \theta_{ik} \theta_{jk}$ will be higher when product $\lambda_i \lambda_j f(a_i)^T f(a_j)$ is higher (node similar other outlier or boundary node). In other words, a node will be assigned a community which is most similar to it.

\begin{figure*}[h]
\begin{subfigure}[b]{\linewidth}
\centering \includegraphics[width=0.95\linewidth]{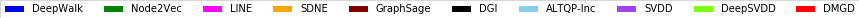}
\vspace{0.3cm}
\end{subfigure}
\begin{subfigure}[t]{.19\textwidth}
  \centering
  \includegraphics[width=\linewidth]{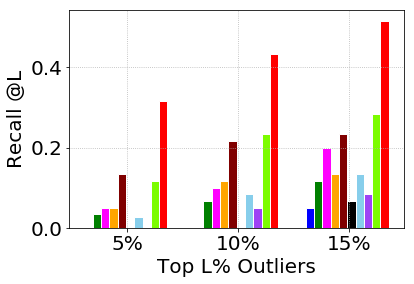}
  \caption{LFR Benchmark}
\end{subfigure}
\begin{subfigure}[t]{.19\textwidth}
  \centering
  \includegraphics[width=\linewidth]{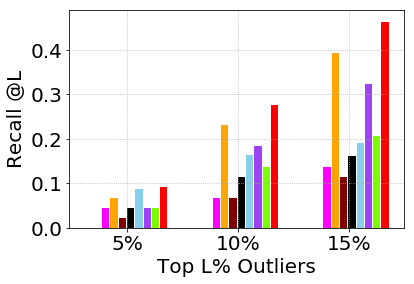}
  \caption{WebKB}
\end{subfigure}
\begin{subfigure}[t]{.19\textwidth}
  \centering
  \includegraphics[width=\linewidth]{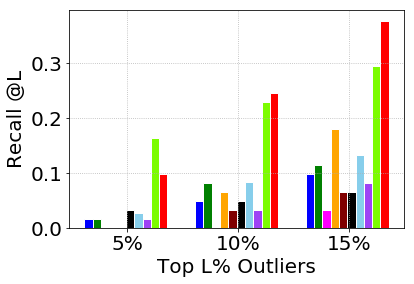}
  \caption{Polblogs}
\end{subfigure}
\begin{subfigure}[t]{.19\textwidth}
  \centering
  \includegraphics[width=\linewidth]{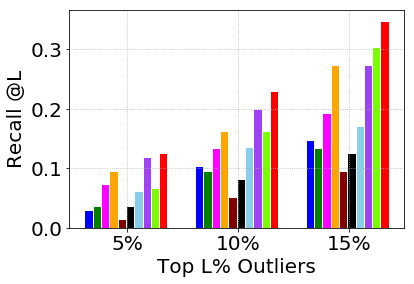}
  \caption{Cora}
\end{subfigure}
\begin{subfigure}[t]{.19\textwidth}
  \centering
  \includegraphics[width=\linewidth]{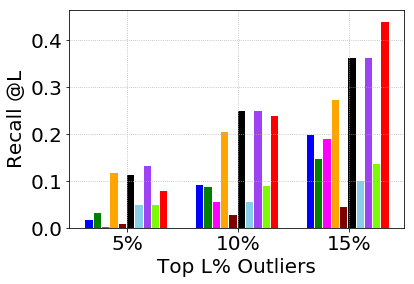}
  \caption{Pubmed}
\end{subfigure}
\caption{Recall at top L\% from the ranked list of outliers by different Embedding algorithms.}
\label{fig:outlierRecall}
\end{figure*}

\textbf{Parameter Initialization and Training of DMGD}: Like in any other deep learning technique, parameter initialization plays an important role in the convergence of our algorithm. First, we run few epochs of the autoencoder (optimizing just Eq. \ref{eq:loss1}) without considering the other terms. This gives generic embeddings $f(a_i)$ for each node $v_i \in V$. Then we run k-means++ algorithm \cite{kmeans++} (with number of clusters equal to $K$) on these embeddings to get the initial hard community assignments (i.e., values of $\theta_{ik}$). Lemma \ref{lemma:equiv} shows a minimum always corresponds to a hard community assignment. 

Once we have the initial embeddings and the community assignments, \textbf{(i)} we use standard quadratic solver CVXOPT qpsolver \footnote{https://cvxopt.org/examples/tutorial/qp.html} to update the values of $\Lambda$ (Eq. \ref{eq:dual} is a constrained quadratic w.r.t. $\Lambda$ when other variables are fixed).
\textbf{(ii)} Then we calculate the center of each community $k \in [K]$ by $c_k = \sum\limits_{i=1}^N \lambda_i \theta_{ik} f(a_i)$ from Eq. \ref{eq:KKT}.
\textbf{(iii)} For each community $k \in [K]$, we calculate the radius $R_k$ as $\underset{i}{\text{min}}\{||f(a_i)-c_k||_2^2 \; | \; \theta_{ik}=1 \; , \; 0 < \lambda_i < \alpha\}$. Experimentally we consider all the nodes for which $\lambda_i > 0$ and $\lambda_i < \alpha - 10^{-6}$ to compensate for numerical errors. \textbf{(iv)} We reassign the community of a node $v_i$, $\forall i \in [N]$ with the updated embeddings and centers by updating $\theta_{ik} = \underset{k}{\text{argmin}} \; \text{max} \{ ||f(a_i) - c_k||_2^2 - R_k^2, \; 0 \}$.
\textbf{(v)} Finally, with all other updated variables fixed, we compute the partial gradients of Eq. \ref{eq:dual} w.r.t. the neural network parameters $\mathcal{W}$ and update them using backpropagation with ADAM optimizer. It is to be noted that, exact computation of the partial gradient w.r.t. each node takes $O(NK)$ time due to the second term in Eq. \ref{eq:dual}. To avoid this, we sub-sample a fixed number of nodes from the same community to approximate that term. Similarly, we also sub-sample a fixed number of nodes from the neighborhood of any node \cite{hamilton2017inductive} to approximate the fourth term (homophily loss) Eq. \ref{eq:dual}.
We use the standard parameterization of the ADAM optimizer as in \cite{kingma2014adam}. 
Steps (i) to (v) are run sequentially within a loop and the loop is iterated for a fixed number of times or until the loss (in Eq. 10) converges. $\lambda_i$ is considered as the outlier score of the node $v_i \in V$. Please refer Algorithm \ref{alg:DMGD} for the pseudocode of DMGD.

\begin{algorithm} 
  \small
  \caption{\textbf{DMGD} - Deep Multiclass Graph Description}
  \label{alg:DMGD}
\begin{algorithmic}[1]
      
	\Statex \textbf{Input}: The graph $G=(V,E)$, $|V|=N$, Given or generated feature vector $a_i \in \mathbb{R}^D$ for each node $v_i \in V$, $M$: Dimension of the embedding space, $K$: Number of communities in the graph
    \Statex \textbf{Output}: The node embeddings $f(a_i)$, $i \in [N]$ of the graph $G$, Outlier score $\lambda_i$, $i \in [n]$ of the nodes, Community Centers $c_k \in \mathbb{R}^M$ for each community $k \in [K]$, Community assignment variables $\theta_{i1}, \cdots, \theta_{iK}$ for each node $i \in [N]$. 
	\State Run few epochs of the autoencoder (optimizing just Eq. 3 of the main paper) without considering the other terms to initialize the node embeddings
	\State Run k-means++ algorithm (with number of clusters equals to $K$) on these embeddings to get the initial hard community assignments (i.e., values of $\theta_{ik}$).
	\For{$iter \in \{1,2,\cdots,T\}$}
      \State Use standard quadratic solver CVXOPT qpsolver to update the values of $\Lambda$ (refer to Eq. 10)
      \State Calculate the center of each community $k \in [K]$ by $c_k = \sum\limits_{i=1}^N \lambda_i \theta_{ik} f(a_i)$ from Eq. 8 of the main paper.
      \For{$k \in [K]$}
        \State Calculate the radius $R_k$ as $\underset{i}{\text{min}}\{||f(a_i)-c_k||_2^2 \; | \; \theta_{ik}=1 \; , \; 0 < \lambda_i < \alpha \}$
      \EndFor
      \For{$i \in [N]$}
        \State Reassign the community of the node $v_i$ with the updated embeddings and centers by updating $\theta_{ik} = \underset{k}{\text{argmin}} \; \text{max} \{ ||f(a_i) - c_k||_2^2 - R_k^2, \; 0 \}$.
      \EndFor
      \State Retrain the autoencoder (Section 3 of the main paper) by computing the partial gradients of Eq. 10 of the main paper w.r.t. the neural network parameters $\mathcal{W}$ and update them using backpropagation with ADAM optimizer
    \EndFor
	\end{algorithmic}
  \end{algorithm} 

\textbf{Time and Space Complexity Analysis}:
Initial cluster assignments using k-means++ algorithm takes $O(NMK)$ time, assuming number of iterations needed is a constant, where $M$ is the embedding dimension. CVXOPT has run time of $O(NlogN)$ to updates $\Lambda$. Steps (ii) to (iv) take a total time of $O(NK)$. Followed by that, updates of the neural network again take $O(NK)$ time, thanks to the sub-sampling techniques which also make the time complexity independent of the number of edges in the network, without any significant drop in performance.
Hence the  major bottleneck is the computation of $\Lambda$ using the quadratic solver. To overcome this, one can use the bounds on the number of points having non-zero $\lambda_i$ from Lemmas \ref{lemma:upbound} and \ref{lemma:lowbound}. An improved solution based on this can be addressed in the future.
The space complexity of DMGD is $O(|V| + |E|)$. The graph can be stored as an adjacency list on the disk. DMGD does not need the whole $O(N^2)$ expensive adjacency matrix to work with.


\section{\uppercase{Experimental Evaluation}}\label{sec:exp}
\subsection{Datasets Used and Seeding Outlier}\label{sec:dataset}
One primary goal of DMGD is to handle and detect outliers while generating the embeddings. To the best of our knowledge, there is no publicly available standard network dataset with ground truth outliers. Also for many of the network datasets, underlying community structure does not match the label of the nodes, i.e., two nodes having same ground truth label may not belong to the same community and vice versa. So to validate our algorithm, we use a combination of synthetic (LFR Benchmark: \url{https://bit.ly/2Xx4EJh}) and real world network datasets, described in Table \ref{tab:data}. We also seed 5\% outliers into each dataset by perturbing nodes as follows. To perturb each outlier, we select a node randomly from the dataset. We find the top 20\% nodes from the dataset which are at the farthest distance from the selected node. Finally we randomly sample an equal number of nodes as the degree of the node in the network from neighbors of those 20\% nodes and the neighbors of the selected nodes. Most of the neighbors of the selected node belong to the same community, and most of the farthest nodes belong to different communities. Thus our perturbed outliers have edges to nodes from multiple communities, and satisfy the conditions of a community outlier. Please note, labels of outliers have \textbf{not} been considered for calculating the accuracy of any downstream task. 

\begin{table}
\caption{Summary of the datasets, after planting outliers.}
	\centering
    \resizebox{0.7\columnwidth}{!}{%
	\begin{tabular}{*4c}
	\toprule
	\sffamily{Dataset} & \#Nodes & \#Edges & \#Labels \\
    \hline
	\midrule
	\sffamily{LFR Benchmark}  & 1200 & 5277 & 10 \\
	\sffamily{WebKB} & 877 & 2897  & 5 \\
	\sffamily{Polblogs} & 1224 & 19025  & 2 \\
	\sffamily{Cora} & 2708 & 5429  & 7 \\
	\sffamily{Pubmed} & 19717 & 44338  & 3 \\
\bottomrule
	\end{tabular}
    }
	\label{tab:data}
\end{table}

\subsection{Baseline Algorithms and Experimental Setup}
The proposed algorithm DMGD is unsupervised in nature. So as baselines, we choose only well-known unsupervised embedding algorithms which can work even without attributes of the nodes. The baselines are DeepWalk \cite{perozzi2014deepwalk}, node2vec \cite{grover2016node2vec}, LINE \cite{tang2015line}, SDNE \cite{wang2016structural}, GraphSAGE \cite{hamilton2017inductive} (unsupervised version) and DGI \cite{velivckovic2018deep} for all the downstream tasks. We use the publicly available implementation of these algorithms, with default hyper parameter settings.
We do not consider node attributes as the focus is to exploit only the network structure.
Additionally, we consider SVDD and Deep SVDD only for the experiments on outlier recall, as these two algorithms are not meant for network embedding. We generate node embeddings using Eq. \ref{eq:loss1} before applying SVDD. We have also used following two purely graph based algorithms (not for node embedding): AltQP-Inc \cite{tong2011non} for outlier detection and SBMF \cite{zhang2013overlapping} for community detection, and included the results for the respective tasks.

Embedding dimension is fixed at 16 for all the algorithms, on all the datasets, except Pubmed. For Pubmed, the embedding dimension is 32 as it is larger in size. We tried with increased embedding dimensions also (for e.g., 128), but there is no significant improvement in results. Encoder and decoder in DMGD contain two layers each for all the datasets. We use leaky-ReLU activation for non-linearity in all the layers except the last one, which has ReLU activation. We train autoencoder using ADAM \cite{kingma2014adam} optimizer with default parameters. 

\begin{figure*}[h!]
   \begin{subfigure}[b]{\linewidth}
    \centering \includegraphics[width=0.7\linewidth]{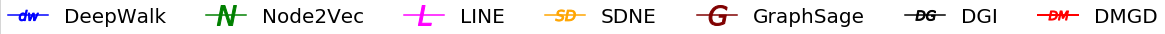}
    \vspace{0.3cm}
  \end{subfigure}
  \centering
  \begin{subfigure}[b]{0.19\linewidth}
    \includegraphics[width=\linewidth]{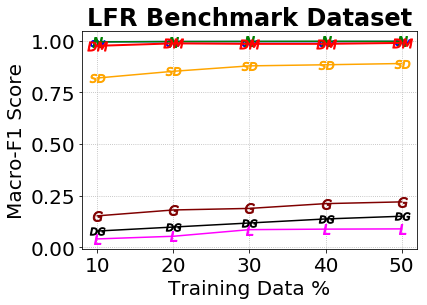}
  \end{subfigure}
  \begin{subfigure}[b]{0.19\linewidth}
    \includegraphics[width=\linewidth]{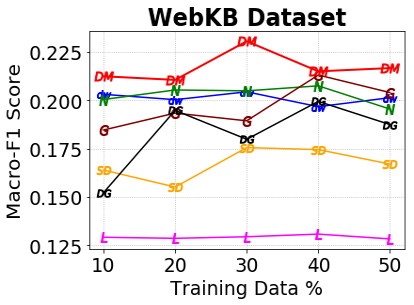}
  \end{subfigure}
  \begin{subfigure}[b]{0.19\linewidth}
    \includegraphics[width=\linewidth]{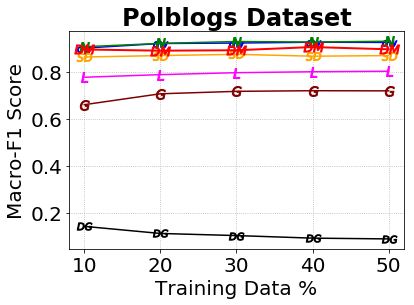}
  \end{subfigure}
  \begin{subfigure}[b]{0.19\linewidth}
    \includegraphics[width=\linewidth]{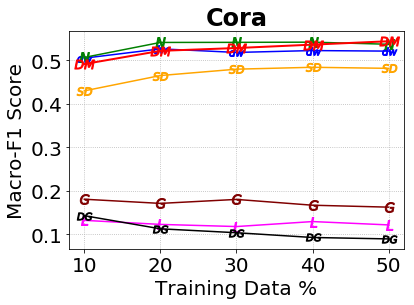}
  \end{subfigure}
  \begin{subfigure}[b]{0.19\linewidth}
    \includegraphics[width=\linewidth]{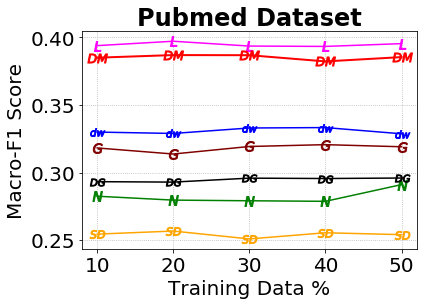}
  \end{subfigure}

   \caption{Accuracy of Node classification with Logistic Regression}
	\label{fig:classi}
\end{figure*}

\subsection{Setting the hyper parameters of DMGD}\label{sec:hyperparameters}
Like many other ML algorithms, we also assume to know the number of communities $K$ of a network. We theoretically show the relation between the hyper parameter $\alpha$ and the number of outliers detected by DMGD. If the expected number of outliers is known a priori, $\alpha$ can be set accordingly. The ratio $\frac{\gamma}{\beta}$, after taking $\beta$ as common from the last two components of Eq. 5 of the paper, weights the autoencoder reconstruction loss and the homophily loss and can be set as shown in \cite{wang2016structural}.
$\beta$ balances the outlier and community detection parts of DMGD with the embedding generation part. Increasing value of $\beta$ would give more importance to generating the generic node embeddings, where decreasing it would be to minimize the effect of outliers and give importance on the community structure of the network. To set the value of $\beta$, we use the standard grid search which minimizes both total loss and the individual component losses of the DMGD cost function.

\begin{figure}
\centering
\begin{subfigure}[b]{\linewidth}
    \centering \includegraphics[width=0.8\linewidth]{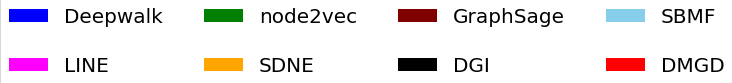}
  \end{subfigure}
\centering
\begin{subfigure}[t]{\linewidth}
  \centering
  \includegraphics[width=0.8\linewidth]{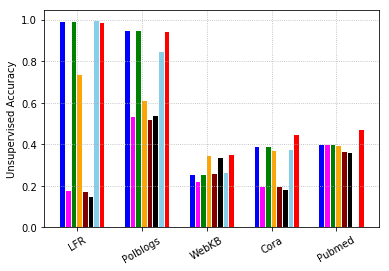}
\end{subfigure}
\caption{Unsupervised accuracy for community detection}
\label{fig:clustering}
\end{figure}

\subsection{Outlier Detection, Community Detection and Node Classification}\label{sec:tasks}
Outlier detection is extremely important for network embedding, as discussed in Section \ref{sec:intro}. We use $\lambda_i$ as the outlier score of the node $v_i \in V$ for DMGD (Sec. \ref{sec:opt}). SVDD and Deep SVDD also produce outlier scores of the nodes directly. For other baselines, we use isolation forest algorithm \cite{liu2008isolation} on the generated embeddings to get outlier scores of the nodes. Each seeded dataset has 5\% outliers. So we plot the outlier recall from the top 5\% to 25\% of the nodes in the ranked list (L) with respect to the seeded outliers in Fig. \ref{fig:outlierRecall}. Clearly, DMGD outperforms all the baseline algorithms for outlier recall. As it adheres multiple community structure of the network, it is able to detect outliers which lie between the communities. Deep SVDD, due to its high non-linear nature, turns out to be more consistent than other baselines. Most of the standard graph embedding algorithms like node2vec and GraphSAGE suffer as they do not process outliers while generating the embeddings.

DMGD also outputs the community assignment of the nodes in the graph though the set of variables $\Theta$ (Sec. \ref{sec:opt}). Here we check the quality of the communities produced by DMGD, with respect to the ground truth labeling of the datasets. For the baseline algorithms (except for SBMF), we give the node embeddings as input to KMeans++ \cite{kmeans++}. To judge the quality of clustering, we use unsupervised clustering accuracy \cite{xie2016unsupervised}.
Figure \ref{fig:clustering} shows that DMGD performs better or almost as good as the best of the baselines for community detection. As DMGD integrates community detection with graph embedding and outlier detection, the output communities are more optimal than most of the baselines which finds communities by post-processing the embeddings. We could not present the result of SBMF on Pubmed as the runtime exceeds more than 3 days.

To compare the quality of node embeddings generated by DMGD to that of the baselines, we also consider node classification. We vary the training size from 10\% to 50\%. We train a logistic regression classifier on the training set of embeddings (along with the class labels) and check the performance on the test set by using Macro F1 score. Figure \ref{fig:classi} shows the performance of node classification for all the publicly available datasets we used. 
Though the optimization of DMGD explicitly handles communities and outliers in node embeddings, its performance for node classification is highly competitive to state-of-the-art network embedding algorithms. On WebKB, DMGD turns out to be the best performer, while for other datasets, it is always very close to the best of the baselines. 
DeepWalk, node2vec, SDNE and LINE also perform good for node classification task depending on the datasets.

\begin{table}
\caption{Classification (macro \& micro F1 with training size 30\%) and clustering (unsupervised accuracy) performance on the unseeded and seeded versions of Cora.}
\label{tab:out_eff_cora}
\centering
\resizebox{1.0\columnwidth}{!}{%
\begin{tabular}{c|c|ccccc}
\toprule
\multicolumn{2}{c}{}                   & node2vec & LINE & GraphSAGE & DGI & DMGD \\
\midrule
\multirow{2}{*}{Macro-F1 (\%)}   & unseeded & \textbf{55.99} & 16.02 & 20.55 & 10.84 & 53.05    \\
                            & seeded     & \textbf{54.11}  & 11.80  & 18.03 & 10.38 & 52.80     \\
\cline{1-7}
\multirow{2}{*}{Micro-F1 (\%)}   & unseeded & \textbf{60.72}	& 35.32	& 34.86	& 31.54 & 56.32     \\
                            & seeded     & 55.99 & 29.16 & 31.65 & 27.85 &  \textbf{57.24} \\
\cline{1-7}
\multirow{2}{*}{Clustering (\%)} & unseeded   & 39.51	& 27.51	& 22.34	& 20.08 & \textbf{43.83} \\
                            & seeded & 39.31	& 19.77 & 20.12 & 18.45  & \textbf{44.86} \\
\bottomrule
\end{tabular}
}
\end{table}


\subsection{Influence of Outliers on Node Embeddings}\label{sec:effect_out}
Here, we empirically show the negative influence of outliers on the node embeddings and thus motivate the problem again.
To show the effect of outliers, we run DMGD and some of the better performing and diverse baseline algorithms (due to limitation of page) on both the unseeded (original) and seeded (with outliers) versions of Cora dataset in Table \ref{tab:out_eff_cora}. 
We use node classification and clustering as the downstream tasks to show the effect. Clearly, most of the baseline algorithms are affected because of the presence of the community outliers. The performance of an algorithm is generally better in the unseeded version of the dataset than the seeded one. This can be seen by comparing two consecutive rows for a metric in Table \ref{tab:out_eff_cora}.
For DMGD, the adverse effect is less as it is resistant to outliers. In some of the cases, there is some marginal improvement in the performance of DMGD on the seeded dataset.

\section{\uppercase{Discussion and Future Work}}
In this work, we proposed an unsupervised algorithm DMGD, which integrates node embedding, minimizing the effect of outliers and community detection into a single optimization framework. 
We also show the theoretical bounds on the number of outliers detected by it. One shortcoming of DMGD is that, it depends heavily on the community structure of the network, which may not be so prominent in some real network. In future, we would like to address this issue. We would also like to conduct experiments on networks with overlapping communities to check the performance of DMGD in that case.


\bibliography{ecai}
\end{document}